\documentclass[conference]{IEEEtran}
\ifCLASSINFOpdf
  \usepackage[pdftex]{graphicx}

\else

\fi

\usepackage{amsmath, amssymb, amsfonts, cite, lipsum, amsthm, mathrsfs, subcaption, balance, xcolor, float}
\newtheorem{theorem}{Theorem}
\newtheorem{proposition}{Proposition}
\let\bs\boldsymbol
\DeclareMathOperator*{\argmax}{\arg\!\max}
\hyphenation{op-tical net-works semi-conduc-tor}

\begin{document}
\title{On the Performance of NOMA-based Cooperative Relaying with Receive Diversity}

\author{\IEEEauthorblockN{Vaibhav Kumar,
Barry Cardiff, and Mark F. Flanagan}
\IEEEauthorblockA{School of Electrical and Electronic Engineering, 
University College Dublin, Belfield, Dublin 4, Ireland\\
Email: vaibhav.kumar@ucdconnect.ie, barry.cardiff@ucd.ie, mark.flanagan@ieee.org}}
\maketitle

\begin{abstract}
Non-orthogonal multiple access (NOMA) is widely recognized as a potential multiple access (MA) technology for efficient spectrum utilization in the fifth-generation (5G) wireless standard. In this paper, we present the achievable sum rate analysis of a cooperative relaying system (CRS) using NOMA with two different receive diversity schemes -- selection combining (SC), where the antenna with highest instantaneous signal-to-noise ratio (SNR) is selected, and maximal-ratio combining (MRC). We also present the outage probability and diversity analysis for the CRS-NOMA system. Analytical results confirm that the CRS-NOMA system outperforms the CRS with conventional orthogonal multiple access (OMA) by achieving higher spectral efficiency at high transmit SNR and achieves a full diversity order.
\end{abstract}

\IEEEpeerreviewmaketitle

\section{Introduction}
NOMA has recently been recognized as a promising multiple access technology for 5G wireless networks and beyond, as it can meet the ubiquitous and heterogeneous demands on low latency and high reliability, and can support massive connectivity by providing high throughput and better spectral efficiency~\cite{Bhargava}. It enables multiple users to simultaneously share a time slot, a frequency channel and/or a spreading code, via multiplexing them in the power domain at the transmitter and using successive interference cancellation (SIC) at the receiver to remove messages intended for other users. 

An interesting application of NOMA for a power-domain multiplexed system using cooperative relaying in Rayleigh distributed block fading channels was proposed in~\cite{CRS_NOMA}, where the source was able to deliver two data symbols to the destination in two time slots with the help of a relay.  The advantage of such a system can easily be seen in terms of throughput, compared to the conventional OMA relaying system where a single symbol is delivered to the destination in two time slots. In particular, closed-form expressions for the average achievable sum rate and for near-optimal power allocation were derived in~\cite{CRS_NOMA}. A performance analysis of the CRS-NOMA system over Rician fading channels was presented in~\cite{CRS_NOMA_Rician}, where the authors developed an analytical framework for the average achievable sum-rate and also proposed a method to calculate the approximate achievable rate by using Gauss-Chebyshev integration.

In this paper, we investigate the performance of the CRS-NOMA system for the case when the relay and the destination are equipped with multiple receive antennas. We consider two different diversity combining techniques at the relay and destination receivers -- SC and MRC. We derive closed-form expressions for the average achievable sum-rate and outage probability for the CRS-NOMA system for the cases of SC and MRC receivers. For the purpose of comparison, we present numerical results for the achievable rate of the CRS-OMA system with SC and MRC. In order to have a better insight into the system performance, we present the diversity analysis for the CRS-NOMA system and prove analytically that the system achieves full diversity order for both SC and MRC schemes.

\section{System Model}
Consider the CRS-NOMA model shown in Fig. \ref{SysMod}, which consists of a source $S$ with a single transmit antenna, a relay $R$ with $N_{r}$ receive antennas and a single transmit antenna, and a destination $D$ with $N_{d}$ receive antennas. All nodes are assumed to be operating in half-duplex mode and all wireless links are assumed to be independent and Rayleigh distributed. The channel coefficient between the source and the $i^{\text{th}}$ relay antenna $(1 \le i \le N_r)$ is denoted by $h_{s r, i}$ and has mean-square value $\Omega_{sr}$ for any value of $i$, while that between the source and the $j^{\text{th}}$ destination antenna $(1 \le j \le N_d)$ is denoted by $h_{sd, j}$ and has the mean-square value $\Omega_{sd}$ for any value of $j$. Similarly, the channel coefficient between the relay and the $k^{\text{th}}$ destination antenna $(1 \le k \le N_d)$ is denoted by $h_{r d, k}$ and has mean-square value $\Omega_{rd}$ for any value of $k$. Furthermore, it is assumed that the channels between the source and the destination are on average weaker than those between the source and the relay, i.e., $\Omega_{sd} < \Omega_{sr}$.
\vskip-0.15in
\begin{figure}[hbht]
\centering
\includegraphics[scale=0.7]{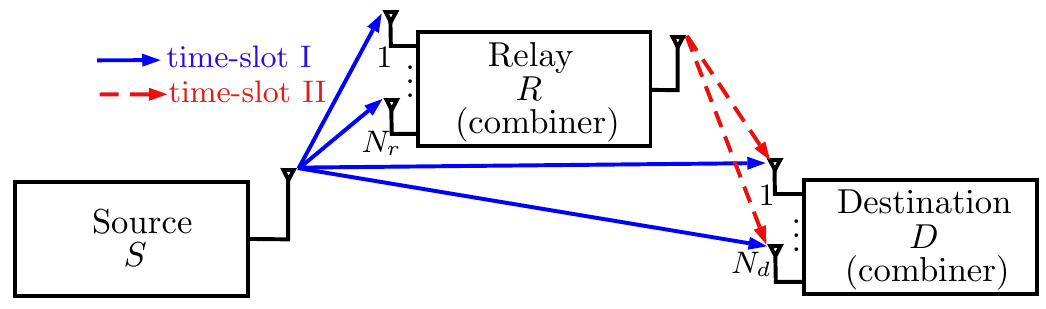}
\caption{System model for CRS-NOMA with multiple receive antennas.}
\label{SysMod}.
\end{figure} 
\vskip-0.2in
In the CRS-NOMA scheme, the source broadcasts $\sqrt{a_{1}P_{t}} s_{1} + \sqrt{a_{2}P_{t}} s_{2}$ to both relay and destination, where $s_1$ and $s_2$ are the data-bearing constellation symbols which are multiplexed in the power domain $(\mathbb{E}\{ | s_i |^2 \} = 1 \text{ for } i=1,2)$, $P_t$ is the the total power transmitted from the source, and $a_1$ and $a_2$ are power weighting coefficients satisfying the constraints $a_1 + a_2 = 1$ and $a_1 > a_2$. Upon reception, the destination decodes symbol $s_{1}$ treating interference from $s_2$ as additional noise, while the relay first decodes symbol $s_{1}$ and then applies SIC to decode symbol $s_{2}$. In the second time slot, the source remains silent and only the relay transmits its estimate of symbol $s_2$, denoted by $\hat{s}_2$, to the destination with full transmit power $P_{t}$. In this manner, two different symbols are delivered to the destination in two time slots. 

In contrast to this, in the conventional OMA scheme, the source broadcasts symbol $s_{1}$ with power $P_{t}$ to both relay and destination in the first time slot and the relay retransmits the estimate of symbol $s_1$, denoted by $\hat{s}_{1}$, to the destination in the second time slot. The destination then combines both copies of symbol $s_{1}$ and in this manner only a single symbol is delivered to the destination in two time slots. 

\section{Performance Analysis} 
In this section, we present the achievable sum-rate, outage probability and diversity analysis of the CRS-NOMA system with two different receive diversity combining techniques, namely SC and MRC.

\subsection{Reception using SC for CRS-NOMA}
The signal received at the relay (resp. destination) in the first time slot is given by
\begin{align*}
	y_{s\mu, \mathrm{SC}} & = h_{s\mu, i^*} \left(\sqrt{a_{1}P_{t}} s_{1} + \sqrt{a_{2}P_{t}} s_{2}\right) + n_{s\mu}, 
\end{align*}
where $\mu = r$ (resp. $\mu = d$) and $i^* = \argmax_{1 \leq i \leq N_{\mu}}(|h_{s\mu, i}|)$. Moreover, $n_{s\mu}$ denotes complex additive white Gaussian noise (AWGN) with zero mean and variance $\sigma^{2}$. The received instantaneous signal-to-interference-plus-noise ratio (SINR) at the relay for decoding symbol $s_{1}$ and the instantaneous signal-to-noise ratio (SNR) for decoding symbol $s_{2}$ (assuming the symbol $s_{1}$ is decoded correctly) are $\gamma_{sr, \mathrm{SC}}^{(1)} = \frac{\delta_{sr}a_{1}P_{t}}{\delta_{sr}a_{2}P_{t} + \sigma^{2}}$ and  $\gamma_{sr, \mathrm{SC}}^{(2)} = \frac{\delta_{sr}a_{2}P_{t}}{\sigma^{2}}$, respectively, where $\delta_{sr} = |h_{sr, i^*}|^{2}$. Similarly, the received instantaneous SINR at the destination for the decoding of symbol $s_{1}$ is given by $\gamma_{sd, \mathrm{SC}} = \frac{\delta_{sd}a_{1}P_{t}}{\delta_{sd}a_{2}P_{t} + \sigma^{2}}$, where $\delta_{sd} = |h_{sd, j^*}|^{2}$. In the next time slot, the relay transmits the decoded symbol $\hat{s}_{2}$ to the destination with power $P_{t}$. The received signal at the destination is given by 
\begin{equation}
	y_{rd, \mathrm{SC}} = h_{rd, k^*} \sqrt{P_{t}}\hat{s}_{2} + n_{rd}, \notag 
\end{equation}
where $k^* = \argmax_{1 \leq k \leq N_{d}} (|h_{rd, k}|)$ and $n_{rd}$ is zero-mean complex AWGN with variance $\sigma^2$. The received instantaneous SNR at the destination while decoding the symbol $s_{2}$ is given by $\gamma_{rd, \mathrm{SC}} = \frac{\delta_{rd}P_{t}}{\sigma^{2}}$, where $\delta_{rd} = |h_{rd, k^*}|^{2}$. Since the symbol $s_{1}$ should be correctly decoded at the destination as well as at the relay for SIC, the average achievable rate for the symbol $s_{1}$ is given by (c.f.~\cite{CRS_NOMA_Rician})
\begin{align}
	\bar{C}_{s_{1}, \!\mathrm{SC}} = & \, \dfrac{1}{2 \ln(2)} \left[ \!\rho\!\! \int_{0}^{\infty} \!\!\dfrac{1\! -\! F_{X}(x)}{1 + \rho x} dx\! -\! \rho a_2 \!\!\!\int_{0}^{\infty} \!\!\dfrac{1 - F_{X}(x)}{1 + \rho a_2 x} \, dx\right] \notag \\
	= & \dfrac{1}{2\ln(2)}(I_{1} - I_{2}), \label{C_s1_SC_integral}
\end{align}
where $\rho = P_{t}/\sigma^{2}$ is the transmit SNR, $X \triangleq \min \{ \delta_{sr}, \delta_{sd}\}$ and $F_{X}(x)$ denotes the cumulative distribution function (CDF) of the random variable $X$.
\begin{theorem}
A closed-form expression for the average achievable rate for symbol $s_1$ in Rayleigh fading using SC in CRS-NOMA  is given by
\begin{align}
	& \bar{C}_{s_1, \mathrm{SC}} = \dfrac{1}{2 \ln(2)} \sum_{k = 1}^{N_r} \sum_{j = 1}^{N_d} (-1)^{k + j} \binom{N_r}{k} \binom{N_d}{j} \notag \\ 
	 &  \hspace{0.3cm} \times \left[ \!\exp \! \left( \! \dfrac{\chi_{k, j}}{\rho} \!\right)\! \Gamma \! \left( \! 0, \dfrac{\chi_{k, j}}{\rho} \!\right) \! -\! \exp \left( \! \dfrac{\chi_{k, j}}{\rho a_2} \!\right) \!\Gamma\! \left( \!0, \dfrac{\chi_{k, j}}{\rho a_2}\!\right) \!\right], \label{C_s1_SC_Closed}
\end{align} 
where $\chi_{k, j} = (k/\Omega_{sr}) + (j/\Omega_{sd})$ and $\Gamma(\cdot, \cdot)$ denotes the upper-incomplete Gamma function.
\end{theorem}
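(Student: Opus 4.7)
The plan is to proceed in three stages: first compute the survival function $1-F_X(x)$ explicitly, then interchange summation and integration in $I_1$ and $I_2$, and finally evaluate the resulting scalar integrals in closed form using an incomplete-gamma identity.

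First I would derive $F_X$. Because the links are i.i.d.\ Rayleigh within each hop, $|h_{sr,i}|^2$ is exponential with mean $\Omega_{sr}$ and $|h_{sd,j}|^2$ is exponential with mean $\Omega_{sd}$. Under SC the combined SNR variables are the maxima, so $F_{\delta_{sr}}(x)=(1-e^{-x/\Omega_{sr}})^{N_r}$ and $F_{\delta_{sd}}(x)=(1-e^{-x/\Omega_{sd}})^{N_d}$. Using independence of the two hops and $X=\min\{\delta_{sr},\delta_{sd}\}$, I get $1-F_X(x)=[1-F_{\delta_{sr}}(x)][1-F_{\delta_{sd}}(x)]$. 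Expanding each bracket by the binomial theorem and dropping the $k=0$ and $j=0$ terms (which correspond to the constant $1$ in each factor and cancel) yields
\begin{equation*}
1-F_X(x)=\sum_{k=1}^{N_r}\sum_{j=1}^{N_d}(-1)^{k+j}\binom{N_r}{k}\binom{N_d}{j}e^{-\chi_{k,j}\,x},
\end{equation*}
with $\chi_{k,j}=k/\Omega_{sr}+j/\Omega_{sd}$ as defined in the theorem.

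Next I would plug this representation into $I_1$ and $I_2$ from (\ref{C_s1_SC_integral}), pulling the finite double sum outside the integral. That reduces the problem to evaluating the single prototype integral $J(\alpha,\beta)\triangleq\int_0^\infty \frac{e^{-\alpha x}}{1+\beta x}\,dx$ for $\alpha=\chi_{k,j}$ and $\beta\in\{\rho,\rho a_2\}$. Using the substitution $u=\alpha(1+\beta x)/\beta$ (equivalently a shift-and-rescale that maps $1+\beta x$ onto the argument of $\Gamma(0,\cdot)=\int_{(\cdot)}^\infty e^{-t}/t\,dt$), I obtain $J(\alpha,\beta)=\beta^{-1}e^{\alpha/\beta}\Gamma(0,\alpha/\beta)$. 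Multiplying by the prefactor $\rho$ (resp.\ $\rho a_2$) cancels the $\beta^{-1}$, so $I_1$ and $I_2$ reduce exactly to the two exponential-times-$\Gamma(0,\cdot)$ terms appearing in (\ref{C_s1_SC_Closed}). Substituting into $\bar C_{s_1,\mathrm{SC}}=(I_1-I_2)/(2\ln 2)$ completes the derivation.

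The two genuine technical points, neither of which is deep, are: (i) justifying the interchange of the finite sum with the improper integral, which is immediate since each term in the binomial expansion is integrable against $1/(1+\beta x)$ and the sum has finitely many terms; and (ii) casting $J(\alpha,\beta)$ into the incomplete-gamma form — this is the only step where a standard identity must be recognized, and I expect it to be the main calculational obstacle, but it follows directly from the substitution above together with the definition $\Gamma(0,z)=\int_z^\infty t^{-1}e^{-t}\,dt$. Everything else is bookkeeping of the indices $k,j$ and the sign $(-1)^{k+j}$.
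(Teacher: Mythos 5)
Your proposal is correct and follows essentially the same route as the paper's Appendix A: binomial expansion of the SC order-statistic CDFs, the product form of the survival function of the minimum, and reduction of $I_1$ and $I_2$ to $\int_0^\infty e^{-\alpha x}/(1+\beta x)\,dx = \beta^{-1}e^{\alpha/\beta}\Gamma(0,\alpha/\beta)$. The only cosmetic difference is that you derive this last identity by direct substitution, whereas the paper cites a table entry together with $-\operatorname{Ei}(-x)=\Gamma(0,x)$.
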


\noindent \emph{Proof}: See Appendix A. 

The average achievable rate for symbol $s_2$ is given by~(c.f.~\cite{CRS_NOMA_Rician})
\begin{align}
	\bar{C}_{s_2, \mathrm{SC}} = \dfrac{\rho}{2 \ln(2)} \int_{0}^{\infty}\dfrac{1 - F_{Y}(x)}{1 + \rho x}\, dx, \label{C_s2_SC_integral}
\end{align}
where $Y \triangleq \min\{\delta_{sr}a_2, \delta_{rd}\}$.
\begin{proposition}
A closed-form expression for the average achievable rate for symbol $s_2$ in Rayleigh fading using SC in CRS-NOMA is given by
\begin{align}
	\!\!\!\!\bar{C}_{s_2, \mathrm{SC}} \!= \!&  \sum_{k = 1}^{N_r} \! \sum_{j = 1}^{N_d} \!\!\dfrac{(-1)^{k + j}}{2\ln(2)}   \binom{N_r}{k} \!\! \binom{N_d}{j} \!\! \exp\!\! \left(\! \dfrac{\theta_{k, j}}{\rho}\!\!\right) \!\!\Gamma \!\!\left(\! 0, \!\dfrac{\theta_{k, j}}{\rho}\!\!\right), \label{C_s2_SC_Closed}
\end{align}
where $\theta_{k, j} = \tfrac{k}{\Omega_{sr}a_2} + \tfrac{j}{\Omega_{rd}}$. 
\end{proposition}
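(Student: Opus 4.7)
The plan is to follow the same template as the proof of Theorem~1, but with the random variable $Y=\min\{a_2\delta_{sr},\delta_{rd}\}$ replacing $X=\min\{\delta_{sr},\delta_{sd}\}$, and with a single exponential-integral evaluation rather than two. First I would characterize the distributions of $\delta_{sr}$ and $\delta_{rd}$. Since $\delta_{sr}=|h_{sr,i^*}|^2$ is the maximum of $N_r$ i.i.d.\ exponential random variables with mean $\Omega_{sr}$, its CDF is $F_{\delta_{sr}}(x)=(1-e^{-x/\Omega_{sr}})^{N_r}$, and an analogous expression holds for $\delta_{rd}$ with $(N_d,\Omega_{rd})$. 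Expanding via the binomial theorem yields the complementary CDFs
\begin{align*}
1-F_{\delta_{sr}}(x) &= \sum_{k=1}^{N_r}(-1)^{k+1}\binom{N_r}{k}e^{-kx/\Omega_{sr}},\\
1-F_{\delta_{rd}}(x) &= \sum_{j=1}^{N_d}(-1)^{j+1}\binom{N_d}{j}e^{-jx/\Omega_{rd}}.
\end{align*}

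Next, by the mutual independence of $\delta_{sr}$ and $\delta_{rd}$, and using $P(a_2\delta_{sr}>x)=P(\delta_{sr}>x/a_2)$, the CCDF of $Y$ factors as $1-F_Y(x)=P(\delta_{sr}>x/a_2)\,P(\delta_{rd}>x)$. Multiplying the two binomial expansions, the cross term $(-1)^{k+1}(-1)^{j+1}=(-1)^{k+j}$ combines the exponents into a single exponential $e^{-\theta_{k,j}x}$ with $\theta_{k,j}=k/(a_2\Omega_{sr})+j/\Omega_{rd}$, giving
\begin{equation*}
1-F_Y(x)=\sum_{k=1}^{N_r}\sum_{j=1}^{N_d}(-1)^{k+j}\binom{N_r}{k}\binom{N_d}{j}e^{-\theta_{k,j}x}.
\end{equation*}

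Substituting this into \eqref{C_s2_SC_integral} and interchanging the finite sums with the integral reduces the problem to evaluating $\int_0^\infty e^{-\theta_{k,j}x}/(1+\rho x)\,dx$ for each $(k,j)$. Using the standard identity $\int_0^\infty e^{-\alpha x}/(x+\beta)\,dx=e^{\alpha\beta}\,\Gamma(0,\alpha\beta)$ (obtained by the substitution $u=x+\beta$) with $\beta=1/\rho$ and $\alpha=\theta_{k,j}$ yields $(1/\rho)e^{\theta_{k,j}/\rho}\Gamma(0,\theta_{k,j}/\rho)$. The factor $1/\rho$ cancels with the prefactor $\rho/(2\ln 2)$ in \eqref{C_s2_SC_integral}, producing exactly the claimed closed form \eqref{C_s2_SC_Closed}.

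There is no serious obstacle here: the argument is essentially a streamlined version of the Theorem~1 proof because only one exponential-integral contribution appears (there is no $a_2$-scaled second integral to subtract), and the independence between $\delta_{sr}$ and $\delta_{rd}$ makes the CCDF of the minimum factor cleanly. The only routine care needed is to keep track of the $1/a_2$ inside the exponent when rescaling $a_2\delta_{sr}$, which produces the $k/(a_2\Omega_{sr})$ term in $\theta_{k,j}$.
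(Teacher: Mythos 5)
Your proposal is correct and follows essentially the same route as the paper's proof: obtain $1-F_Y(x)$ as a double binomial sum of exponentials (the paper gets this ``analogously to Appendix A,'' you get it by directly factoring the CCDF of the minimum using independence --- the two are equivalent), substitute into \eqref{C_s2_SC_integral}, and evaluate term by term with the exponential-integral identity $\int_0^\infty e^{-\alpha x}/(x+\beta)\,dx = e^{\alpha\beta}\Gamma(0,\alpha\beta)$, which is exactly \cite[eqn.~(3.352-4)]{Grad}. Your write-up in fact supplies the routine details (the $1/a_2$ rescaling and the cancellation of the $\rho$ prefactor) that the paper leaves implicit.
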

\begin{proof}
	Analogous to the arguments in Appendix A and using a transformation of random variables, we have 
	\begin{align}
		1 - F_Y(x) = \sum_{k = 1}^{N_r} \!\sum_{j = 1}^{N_d} \!(-1)^{k + j}\! \binom{N_r}{k} \!\binom{N_d}{j}\! \exp \!\left( -\theta_{k, j} x\right), \label{1-FY}
\end{align}
Substituting the expression of $1 - F_{Y}(x)$ from \eqref{1-FY} into \eqref{C_s2_SC_integral} and solving the integration using~\cite[eqn.~(3.352-4),~p.~341]{Grad}, the closed-form expression for $\bar{C}_{s_2, \mathrm{SC}}$ reduces to~\eqref{C_s2_SC_Closed}.
\end{proof}

The average achievable sum-rate for the CRS-NOMA system using SC in Rayleigh fading is therefore given using \eqref{C_s1_SC_Closed} and \eqref{C_s2_SC_Closed} as 
\begin{align}
	\bar{C}_{\mathrm{sum, SC}} = \bar{C}_{s_1, \mathrm{SC}} + \bar{C}_{s_2, \mathrm{SC}}. \label{C_Sum_SC} 
\end{align}
It is interesting to note that for $N_r = N_d = 1$, \eqref{C_Sum_SC} reduces to~\cite[eqn.~(14)]{CRS_NOMA}. 

\subsection{Reception using SC for CRS-OMA}
The signal received at the relay (resp. destination) in the first time slot is given by 
\begin{align}
	y_{s\mu, \mathrm{SC-OMA}} & = h_{s\mu, i^*}\, \sqrt{P_t} \, s_{1} + n_{s\mu}, \notag 
\end{align}
where $\mu = r$ (resp. $\mu = d$) and $i^* = \argmax_{1 \leq i \leq N_{\mu}}(|h_{s\mu, i}|)$. In the next time slot, the relay forwards its estimate of $s_1$, denoted by $\hat{s}_1$, to the destination. The signal received at the destination is given by 
\begin{align}
	y_{rd, \mathrm{SC-OMA}} & =  h_{rd, k^*} \, \sqrt{P_{t}} \, \hat{s}_{1} + n_{rd}. \notag 
\end{align}

The average achievable rate for the symbol $s_{1}$ is given by~(c.f.~\cite{Laneman})
\begin{align}
	\bar{C}_{\mathrm{SC-OMA}} & = 0.5\, \mathbb E_{W} \left[\log_{2} (1 + W\rho)\right],  \label{C_SC-OMA_integral}
\end{align}
where\footnote{The rate calculation is based on the assumption that the destination performs SC in the first and the second time slots and then applies MRC on the resulting signals from the two time slots. In the case where the destination applies SC on the resulting signals instead of MRC, $W$ will instead be defined as $\min \{ \delta_{sr}, \max\{\delta_{sd}, \delta_{rd}\}\}$, and this will result in a performance degradation with respect to the system described here.} $W \triangleq \min\{\delta_{sr}, \delta_{sd} + \delta_{rd}\}$ and $\mathbb E_{\mathscr Z}[\cdot]$ denotes the expectation with respect to the random variable $\mathscr Z$. Since the focus of this paper is on the NOMA-based systems, we do not present a closed-form analysis for CRS-OMA.

\subsection{Outage probability for CRS-NOMA using SC}
In this subsection, we will characterize the outage probability of symbols $s_{1}$ and $s_{2}$ for the CRS-NOMA using selection combining in Rayleigh fading. We define $\mathcal{O}_{1, \mathrm{SC}}$ as the outage event for symbol $s_1$ using SC, i.e., the event where either the relay or the destination fails to decode $s_{1}$ successfully. Hence the outage probability for the symbol $s_{1}$ is given by 
\begin{align}
	& \Pr(\mathcal{O}_{1, \mathrm{SC}}) = \Pr(C_{s_1, \mathrm{SC}} < R_{1}) \notag \\
	= & \Pr \left[ \dfrac{1}{2} \log_{2} \left( 1 + \dfrac{a_{1} \rho X}{1 + a_{2}\rho X}\right) < R_{1}\right] = \Pr(X < \Theta_{1}) \notag \\
	= &  F_{\delta_{sr}}(\Theta_{1}) + F_{\delta_{sd}}(\Theta_{1}) - F_{\delta_{sr}}(\Theta_{1}) F_{\delta_{sd}}(\Theta_{1}), \label{P_out_s1_SC}
\end{align}
where $C_{s_1, \mathrm{SC}}$ is the instantaneous achievable rate of symbol $s_1$ in CRS-NOMA using SC in Rayleigh fading, $R_{1}$ is the target data rate for the symbol $s_{1}$, $\epsilon_{1} = 2^{2R_{1}}~-~1$ and $\Theta_{1} = \tfrac{\epsilon_{1}}{\rho (a_{1} - \epsilon_{1}a_{2})}$. The system design must ensure that $a_{1} > \epsilon_{1}a_{2}$, otherwise the outage probability for symbol $s_{1}$ will always be~1 as noted in \cite{RelaySelectionDing}. The closed-form expressions for $F_{\delta_{sr}}(\Theta_{1})$ and $F_{\delta_{sd}}(\Theta_{1})$ are given in~Appendix~A. Next, we define $\mathcal{O}_{2, \mathrm{SC}}$ as the outage event for symbol $s_2$ using SC. This outage event can be decomposed as the union of the following disjoint events: (i)~symbol $s_1$ cannot be successfully decoded at the relay; (ii)~symbol $s_1$ is successfully decoded at the relay, but symbol $s_2$ cannot be successfully decoded at the relay; and (iii)~both symbols are successfully decoded at the relay, but symbol $s_2$ cannot be successfully decoded at the destination. Therefore, the outage probability for the symbol $s_2$ may be expressed as
\begin{align}
	\Pr(\mathcal{O}_{2, \mathrm{SC}}) & = \begin{cases} \Pr(\delta_{sr} < \Theta_{1}) + \Pr(\delta_{sr} \geq \Theta_{1}, \delta_{sr} < \Theta_{2}) & \\
	\hspace{0.1cm}+ \Pr (\delta_{sr} > \Theta_{2}, \delta_{rd} < \epsilon_{2}/\rho); \,\operatorname{if}\,\,\Theta_{1} < \Theta_{2} & \\
	\Pr(\delta_{sr} < \Theta_{1}) + \Pr(\delta_{sr} > \Theta_{1}, \delta_{rd} < \epsilon_{2}/\rho); & \\
	\hspace{4.5cm} \operatorname{otherwise}\end{cases} \notag \\
	& = F_{\delta_{sr}} (\Theta) \!+\! F_{\delta_{rd}} (\epsilon_{2}/\rho) \!-\! F_{\delta_{sr}}(\Theta) F_{\delta_{rd}}(\epsilon_{2}/\rho),\!\!\! \label{P_out_s2_SC}
\end{align}
where $R_{2}$ is the target data rate for the symbol $s_{2}$, $\epsilon_{2}~=~2^{2R_{2}}~-~1$, $\Theta_{2} = \tfrac{\epsilon_{2}}{a_{2}\rho}$ and $\Theta = \max\{\Theta_{1}, \Theta_{2}\}$. The closed-form expressions for $F_{\delta_{sr}}(\Theta)$ and $F_{\delta_{rd}}(\epsilon_{2}/\rho)$ are given in~Appendix~A.

\subsection{Diversity analysis for CRS-NOMA using SC}
From \eqref{FdeltaSR}, we have 
\begin{align}
	& F_{\delta_{sr}}(\Theta_{1}) = \sum_{k = 1}^{N_r} (-1)^{k - 1} \binom{N_{r}}{k} \left[ 1 - \exp \left( \dfrac{-k\Theta_{1}}{\Omega_{sr}}\right) \right]  \notag \\
	& = \sum_{k = 1}^{N_r} \sum_{l = 1}^{\infty} \dfrac{(-1)^{k + l}}{l!} \binom{N_r}{k} \left( \dfrac{k\Theta_{1}}{\Omega_{sr}}\right)^{l} = \sum_{l = N_r}^{\infty} \dfrac{(-1)^{l}\Theta_{1}^{l}}{l!\Omega_{sr}^{l}}  \notag \\
	& \hspace{0.5cm}\times \sum_{k = 1}^{N_{r}} \binom{N_r}{k} (-1)^{k}k^{l} \tag{Using \cite[eqn.~(0.154-3), p. 4]{Grad}} \\
	& = \dfrac{(-1)^{N_r} \epsilon_{1}^{N_r}}{(a_{1} - \epsilon_{1}a_{2})^{N_r} N_r ! \Omega_{sr}^{N_r}}  \notag \\
	&\hspace{1cm} \times \sum_{k = 1}^{N_r}\binom{N_r}{k} (-1)^{k}k^{N_r}\rho^{-N_r}+ \mathbb{O}\left[\rho^{-(N_r + 1)}\right], \label{F_deltaSR_order}
\end{align}
where $\mathbb{O}$ is the Landau symbol. Hence it is clear from \eqref{F_deltaSR_order} that $F_{\delta_{sr}}(\Theta_{1})$ decays as $\rho^{-N_r}$ as $\rho \to \infty$. Similarly, it can be easily shown that $F_{\delta_{sd}}(\Theta_{1})$ decays as $\rho^{-N_d}$ and $F_{\delta_{sr}}(\Theta_{1}) F_{\delta_{sd}}(\Theta_{1})$ decays as $\rho^{-(Nr + N_d)}$ as $\rho \to \infty$. Therefore it is straightforward to conclude using \eqref{P_out_s1_SC} that the diversity order of the symbol $s_{1}$ is $\min\{N_r, N_d, N_r N_d\} = \min\{N_r, N_d\}$. Following similar arguments, it can be shown that the diversity order of the symbol $s_{2}$ is $\min\{N_r, N_d\}$.

\subsection{Reception using MRC for CRS-NOMA}
The signal received at the relay (resp. destination) in the first time slot is given by
\begin{align}
	& y_{s\mu, \mathrm{MRC}} = \bs{h}_{s\mu}^{H} \, \left( \bs{h}_{s\mu} \left(\sqrt{a_{1}P_{t}} s_{1} + \sqrt{a_{2}P_{t}}s_{2}\right) + \bs{n}_{s\mu}\right), \notag  
\end{align}
where $\mu = r$ (resp. $\mu = d$), $\bs{h}_{s\mu} = [h_{s\mu, 1}\, h_{s\mu, 2}\, \cdots \, h_{s\mu, N_{\mu}}]^{T} \in \mathbb{C}^{N_{\mu} \times 1}$, $\bs{n}_{s\mu} = [n_{s\mu, 1}\, n_{s\mu, 2}$ $\, \cdots \, n_{s\mu, N_{\mu}}]^{T} \in~\mathbb{C}^{N_{\mu} \times 1}$, $(\cdot)^{H}$ is the Hermitian operator and $(\cdot)^T$ is the transpose operator. The elements in the vector $\bs{h}_{s\mu}$ are independent and distributed as $\mathcal{CN}(0, \Omega_{s\mu})$ and the elements in $\bs{n}_{s\mu}$ are independent and distributed according to $\mathcal{CN}(0, \sigma^2)$.

The received instantaneous SINR at the relay for decoding symbol $s_{1}$ and instantaneous SNR for decoding symbol $s_{2}$ (assuming the symbol $s_{1}$ is decoded correctly) are obtained as $\gamma_{sr, \mathrm{MRC}}^{(1)} = \frac{\lambda_{sr}a_{1}P_{t}}{\lambda_{sr}a_{2}P_{t} + \sigma^{2}}$ and $\gamma_{sr, \mathrm{MRC}}^{(2)} = \frac{\lambda_{sr}a_{2}P_{t}}{\sigma^{2}}$, respectively, where $\lambda_{sr} = \sum_{i = 1}^{N_r}|h_{sr, i}|^{2}$. Similarly, the received instantaneous SINR at the destination while decoding $s_{1}$ is given by $\gamma_{sd, \mathrm{MRC}} = \frac{\lambda_{sd} a_{1}P_t}{\lambda_{sd}a_{2}P_t + \sigma^{2}}$, where $\lambda_{sd} = \sum_{i = 1}^{N_d}|h_{sd, i}|^{2}$. In the next time slot, the relay transmits the decoded symbol $\hat{s}_{2}$ to the destination with power $P_{t}$. The received signal at the destination (after applying MRC) is given by 
\begin{align}
	y_{rd, \mathrm{MRC}} = \bs{h}_{rd}^H \left(  \bs{h}_{rd} \sqrt{P_t}\hat{s}_{2} + \bs{n}_{rd} \right), \notag 
\end{align}
where $\bs{h}_{rd} = \left[h_{rd, 1} \, h_{rd, 2}\, \cdots\, h_{rd, N_d}\right]^T \in \mathbb C^{N_d \times 1}$ with independent elements each distributed as $\mathcal{CN}(0, \Omega_{rd})$ and $\bs{n}_{rd} = [n_{rd, 1}\, n_{rd, 2}$ $\, \cdots \, n_{rd, N_{d}}]^{T} \in~\mathbb{C}^{N_{d} \times 1}$ with independent elements each distributed according to $\mathcal{CN}(0, \sigma^2)$. The received instantaneous SNR at the destination while decoding the symbol $s_{2}$ is $\gamma_{rd, \mathrm{MRC}} = \frac{\lambda_{rd}P_t}{\sigma^{2}}$, where $\lambda_{rd} = \sum_{i = 1}^{N_d}|h_{rd, i}|^2$. The average achievable rate for the symbol $s_{1}$ is given by~(c.f.~\cite{CRS_NOMA_Rician})
\begin{align}
	\bar{C}_{s_1, \mathrm{MRC}} \!\!& = \!\!\dfrac{1}{2 \ln (2)} \!\left[\!\rho\!\!\int_{0}^{\infty}\!\!\dfrac{1 \!-\! F_{\mathcal{X}}(x)}{1 + x \rho}dx \!-\! \rho a_2 \!\!\int_{0}^{\infty}\!\! \dfrac{1 \!-\! F_{\mathcal{X}}(x)}{1 + x \rho a_{2}}dx \right]\notag \\
	& = \dfrac{1}{2 \ln (2)} (I_3 - I_4), \label{C_s1_MRC_Integral}
\end{align}
where $\mathcal{X} = \min\{\lambda_{sr}, \lambda_{sd}\}$.
\begin{theorem}
	A closed-form expression for the average achievable rate for symbol $s_1$ for CRS-NOMA using MRC in Rayleigh fading is given by
	\begin{align}
		& \bar{C}_{s_1, \mathrm{MRC}} = \dfrac{1}{2 \ln(2)} \sum_{i = 0}^{N_r-1} \sum_{j = 0}^{N_d-1} \dfrac{\Gamma(1 + i + j)}{i! j! \Omega_{sr}^i \Omega_{sd}^j \rho^{i + j}}\Bigg[ \exp\left( \dfrac{\phi}{\rho}\right) \notag \\
		& \times \!\left. \Gamma \!\left( -i-j, \dfrac{\phi}{\rho}\right) \!-\! \dfrac{1}{a_2^{i + j}} \exp \!\left( \dfrac{\phi}{\rho a_2}\right)\! \Gamma\! \left( \!-i\!-j, \!\!\dfrac{\phi}{\rho a_2}\right)\!\right], \label{C_s1_MRC_Closed}
	\end{align}
where $\phi = \Omega_{sr}^{-1} + \Omega_{sd}^{-1}$ and $\Gamma(\cdot)$ denotes the Gamma function.
\end{theorem}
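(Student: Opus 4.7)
\noindent\emph{Proof proposal.} The plan is to mirror the proof of Theorem~1, exploiting the fact that with MRC the post-combining gains $\lambda_{sr}$ and $\lambda_{sd}$ are Erlang-distributed (sums of i.i.d.\ exponentials) rather than maxima of exponentials as in the SC case. Since $\lambda_{s\mu}$ is a sum of $N_\mu$ i.i.d.\ exponential random variables of mean $\Omega_{s\mu}$, its CCDF is the standard Erlang form $1-F_{\lambda_{s\mu}}(x)=e^{-x/\Omega_{s\mu}}\sum_{i=0}^{N_\mu-1}(x/\Omega_{s\mu})^i/i!$. Invoking the independence of $\lambda_{sr}$ and $\lambda_{sd}$ via $1-F_{\mathcal{X}}(x)=[1-F_{\lambda_{sr}}(x)][1-F_{\lambda_{sd}}(x)]$ and expanding the product, I would first establish
\[
1 - F_{\mathcal{X}}(x) \;=\; \sum_{i=0}^{N_r-1}\sum_{j=0}^{N_d-1}\frac{x^{\,i+j}\,e^{-\phi x}}{i!\,j!\,\Omega_{sr}^{\,i}\,\Omega_{sd}^{\,j}},
\]
with $\phi=\Omega_{sr}^{-1}+\Omega_{sd}^{-1}$ as in the theorem.

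Next, I would substitute this expansion into $I_3$ and $I_4$ from \eqref{C_s1_MRC_Integral} and interchange the finite double sum with the integral, so that each inner integral reduces to $\int_0^\infty x^n e^{-\phi x}/(1+\xi x)\,dx = \xi^{-1}\!\int_0^\infty x^n e^{-\phi x}/(x+1/\xi)\,dx$ for $n=i+j$ and $\xi\in\{\rho,\,\rho a_2\}$. The key step is then to evaluate this integral via the identity
\[
\int_0^\infty \frac{x^{n}\,e^{-\alpha x}}{x+\beta}\,dx \;=\; n!\,\beta^{n}\,e^{\alpha\beta}\,\Gamma(-n,\alpha\beta),
\]
which I would derive by the substitution $t=x+\beta$ followed by an elementary induction on $n$ using the recurrence $\Gamma(s+1,x)=s\Gamma(s,x)+x^{s}e^{-x}$, with base case $n=0$ furnished by G\&R~3.352-4 (the same identity used in the proof of Proposition~1). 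Plugging this in yields $\rho\!\int_0^\infty x^n e^{-\phi x}/(1+\rho x)\,dx = (n!/\rho^{n})\,e^{\phi/\rho}\,\Gamma(-n,\phi/\rho)$ and the analogous expression for $\xi=\rho a_2$ with an additional factor $a_2^{-n}$. Multiplying by the prefactor $1/(i!\,j!\,\Omega_{sr}^{i}\Omega_{sd}^{j})$, forming the difference $I_3-I_4$, rewriting $n!=\Gamma(1+i+j)$, and dividing by $2\ln(2)$ would then reproduce \eqref{C_s1_MRC_Closed}.

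The main obstacle, relative to Theorem~1, is the appearance of the upper-incomplete Gamma function with a negative-integer first argument. Although $\Gamma(-n,w)=\int_w^\infty t^{-n-1}e^{-t}\,dt$ is perfectly well-defined for $w>0$, it is not one of the tabulated closed-form integrals of G\&R in the way that $\Gamma(0,\cdot)=E_{1}(\cdot)$ is, so the hard part will be carrying out the reduction cleanly either by cascading the recurrence from $\Gamma(0,\cdot)$ or, equivalently, by specializing the Whittaker-function form G\&R~3.383-10 at $\nu=1$ and identifying the resulting $W_{\cdot,\cdot}$ with $\Gamma(-n,\cdot)$. A secondary check worth including is that, although the individual $\Gamma(-n,\phi/\rho)$ terms diverge as $\rho\to\infty$, the $I_3-I_4$ difference stays bounded, consistent with the high-SNR rate scaling predicted by the subsequent diversity analysis.
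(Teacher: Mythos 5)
Your proposal matches the paper's proof essentially step for step: the Erlang-form CCDFs, the product formula for $1-F_{\mathcal{X}}(x)$, and the reduction of each term of $I_3$ and $I_4$ to $\int_0^\infty x^{n}e^{-\phi x}/(x+\beta)\,dx = n!\,\beta^{n}e^{\phi\beta}\,\Gamma(-n,\phi\beta)$, which is precisely G\&R 3.383-10 with $\nu=n+1$ --- the identity the paper simply cites where you propose to re-derive it by induction from 3.352-4. The argument is correct and takes the same route; the only (harmless) difference is that you prove the tabulated integral rather than quoting it.
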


\noindent \emph{Proof}: See Appendix B.

The average achievable rate for the symbol $s_{2}$ is given by~(c.f.~\cite{CRS_NOMA_Rician})
\begin{align}
	\bar{C}_{s_{2}, \mathrm{MRC}} = \dfrac{\rho}{2 \ln (2)} \int_{0}^{\infty} \dfrac{1 - F_{\mathcal{Y}} (x)}{1 + x \rho}dx, \label{C_s2_MRC_Integral}
\end{align}
where $\mathcal{Y} \triangleq \min\{\lambda_{sr}a_{2}, \lambda_{rd}\}$.
\begin{proposition}
	The closed-form expression for the average achievable rate for symbol $s_2$ for CRS-NOMA using MRC in Rayleigh fading is obtained as
	\begin{align}
		\bar{C}_{s_{2}, \mathrm{MRC}} & = \dfrac{1}{2 \ln(2)} \sum_{i = 0}^{N_{r} - 1} \sum_{j = 0}^{N_{d} - 1} \dfrac{\Gamma(1 + i + j) }{i! \,j!\, a_{2}^{i}\,\Omega_{sr}^{i} \, \Omega_{rd}^{j} \rho^{(i + j)}} \notag \\
	& \hspace{2cm}\times    \exp \left( \dfrac{ \xi}{\rho}\right)\Gamma\left(-i - j, \dfrac{\xi}{\rho}\right), \label{C_s2_MRC_Closed}
	\end{align}
where $\xi = (\Omega_{sr} a_2)^{-1} + \Omega_{rd}^{-1}$.	
\end{proposition}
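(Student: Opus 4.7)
The plan is to compute the complementary CDF of $\mathcal{Y} = \min\{a_2 \lambda_{sr}, \lambda_{rd}\}$ in closed form, substitute it into the integral expression \eqref{C_s2_MRC_Integral}, and then evaluate the resulting integral using a standard Gradshteyn--Ryzhik identity. The structure is essentially parallel to the SC case in Proposition 1, except that the underlying random variables $\lambda_{sr}$ and $\lambda_{rd}$ are now sums of i.i.d.\ exponentials rather than maxima, so Erlang (rather than order-statistic) tail sums will appear.

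First I would exploit the independence of $\lambda_{sr}$ and $\lambda_{rd}$ to write $1-F_{\mathcal{Y}}(x) = [1-F_{a_2\lambda_{sr}}(x)]\,[1-F_{\lambda_{rd}}(x)]$. Since $\lambda_{sr}$ and $\lambda_{rd}$ are Erlang with shape parameters $N_r$ and $N_d$ and means $N_r\Omega_{sr}$ and $N_d\Omega_{rd}$ respectively, the Erlang survival function gives
\begin{align*}
1-F_{a_2\lambda_{sr}}(x) &= \exp\!\left(-\tfrac{x}{a_2\Omega_{sr}}\right)\sum_{i=0}^{N_r-1}\tfrac{x^i}{i!\,(a_2\Omega_{sr})^i},\\
1-F_{\lambda_{rd}}(x) &= \exp\!\left(-\tfrac{x}{\Omega_{rd}}\right)\sum_{j=0}^{N_d-1}\tfrac{x^j}{j!\,\Omega_{rd}^j}.
\end{align*}
Multiplying these and collecting the exponents yields $1-F_{\mathcal{Y}}(x) = \sum_{i=0}^{N_r-1}\sum_{j=0}^{N_d-1}\frac{x^{i+j}}{i!\,j!\,a_2^i\,\Omega_{sr}^i\,\Omega_{rd}^j}\exp(-\xi x)$, with $\xi$ as defined in the statement.

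Next I would plug this double sum into \eqref{C_s2_MRC_Integral} and interchange sum and integral, reducing the problem to evaluating $\int_0^{\infty} \frac{x^{i+j}\exp(-\xi x)}{1+\rho x}\,dx$ for each $(i,j)$. After rewriting $1+\rho x = \rho(x+1/\rho)$, this is exactly of the form $\int_0^{\infty} x^{\nu-1}(x+\beta)^{-1} e^{-\mu x}\,dx = \beta^{\nu-1}e^{\mu\beta}\Gamma(\nu)\Gamma(1-\nu,\mu\beta)$ (Gradshteyn--Ryzhik 3.383.10) with $\nu = i+j+1$, $\beta = 1/\rho$, and $\mu=\xi$. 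This gives $\int_0^{\infty} \frac{x^{i+j}e^{-\xi x}}{1+\rho x}dx = \frac{\Gamma(i+j+1)}{\rho^{\,i+j+1}}\exp(\xi/\rho)\,\Gamma(-i-j,\xi/\rho)$, and the leading factor $\rho/(2\ln 2)$ in \eqref{C_s2_MRC_Integral} cancels exactly one power of $\rho$, leaving $\rho^{-(i+j)}$ in the denominator, which matches the claimed form \eqref{C_s2_MRC_Closed}.

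The main subtlety to handle carefully is the application of the integral identity: because $\nu = i+j+1$ is a positive integer, the second-argument shift yields the upper incomplete Gamma function $\Gamma(-i-j,\cdot)$ at a non-positive integer first argument, which is well-defined for positive second argument (via analytic continuation / the exponential-integral representation) and is precisely the compact form used in the theorem statement. Other than this bookkeeping point, the remaining steps are routine algebraic simplifications to match the constants in \eqref{C_s2_MRC_Closed}.
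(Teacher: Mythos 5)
Your proposal is correct and takes essentially the same route as the paper: the paper likewise obtains $1-F_{\mathcal{Y}}(x)$ in \eqref{OneMinusF_mathcalY} from the Erlang survival functions of $a_2\lambda_{sr}$ and $\lambda_{rd}$ (via the transformation of random variables and the product of complementary CDFs), substitutes it into \eqref{C_s2_MRC_Integral}, and evaluates the integral with Gradshteyn--Ryzhik eq.\ (3.383-10). Your additional bookkeeping on the factor of $\rho$ and on $\Gamma(-i-j,\cdot)$ at non-positive integer first argument is consistent with what the paper leaves implicit.
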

\begin{proof}
Similar to the arguments in Appendix B and using a transformation of random variables, we have,
\begin{align}
	1 - F_{\mathcal{Y}}(x) = & \exp (-x \xi) \sum_{i = 0}^{N_r - 1} \sum_{j = 0}^{N_d - 1} \dfrac{x^{i + j}}{i! j! a_{2}^{i} \Omega_{sr}^{i} \Omega_{rd}^{j}}. \label{OneMinusF_mathcalY}
\end{align}
Substituting $1 - F_{\mathcal{Y}}(x)$ from \eqref{OneMinusF_mathcalY} into \eqref{C_s2_MRC_Integral} and solving the integral using~\cite[eqn.~(3.383-10), p. 348]{Grad}, the closed-form expression for $\bar{C}_{s_2, \mathrm{MRC}}$ becomes equal to~\eqref{C_s2_MRC_Closed}.
\end{proof}

Hence, the average achievable sum-rate for the CRS-NOMA using MRC in Rayleigh fading is obtained using \eqref{C_s1_MRC_Closed} and \eqref{C_s2_MRC_Closed} as 
\begin{equation}
	\bar{C}_{\mathrm{sum, MRC}} = \bar{C}_{s_{1}, \mathrm{MRC}} + \bar{C}_{s_{2}, \mathrm{MRC}}. \label{C_sum_MRC}
\end{equation}
It is important to note that for $N_r = N_d = 1$, \eqref{C_sum_MRC} reduces to~\cite[eqn.~(14)]{CRS_NOMA}.

\subsection{Reception using MRC for CRS-OMA}
The signals received in the first time slot at the relay (resp. destination) is given by 
\begin{align}
	y_{s\mu, \mathrm{MRC - OMA}} & = \bs{h}_{s\mu}^{H} \left( \bs{h}_{s\mu} \sqrt{P_t}s_1 + \bs{n}_{s\mu}\right), \notag 
\end{align}
where $\mu = r$ (resp. $\mu = d$). In the next time slot, the relay forwards its estimate of $s_1$, denoted by $\hat{s}_1$, to the destination. The signal received at the destination is given by 
\begin{align}
	y_{rd, \mathrm{MRC - OMA}} & = \bs{h}_{rd}^H \left( \bs{h}_{rd} \sqrt{P_t} \hat{s_1} + \bs{n}_{rd} \right). \notag
\end{align}

Similar to the case of CRS-OMA using SC, the average achievable rate for symbol $s_1$ in CRS-OMA using MRC is given by~(c.f.~\cite{Laneman})
\begin{align}
	\bar{C}_{\mathrm{MRC-OMA}} = 0.5\, \mathbb E_{\mathcal Z}\left[\log_2(1 + \mathcal Z \rho) \right], \label{C_MRC-OMA_Integral}
\end{align}
where $\mathcal{Z} \triangleq \min(\lambda_{sr}, \lambda_{sd} + \lambda_{rd})$.

\subsection{Outage probability for CRS-NOMA using MRC}
Similar to the CRS-NOMA using SC, we define $\mathcal{O}_{1, \mathrm{MRC}}$ as the event that the symbol $s_{1}$ is in outage in the CRS-NOMA using MRC in Rayleigh fading. Hence,
\begin{align}
	\Pr(\mathcal{O}_{1, \mathrm{MRC}}) \!& = \Pr(C_{s_{1}, \mathrm{MRC}} < R_{1}) = F_{\mathcal{X}} (\Theta_{1}) \notag \\
	& = \!F_{\lambda_{sr}}\!(\!\Theta_1\!)\! +\! F_{\lambda_{sd}}(\!\Theta_1\!) \!-\! F_{\lambda_{sr}}(\!\Theta_1\!)F_{\lambda_{sd}}(\!\Theta_1\!), \label{P_out_s1_MRC}
\end{align}
where $C_{s_{1}, \mathrm{MRC}}$ is the instantaneous achievable rate for symbol $s_1$ in CRS-NOMA using MRC in Rayleigh fading. Similarly, we define $\mathcal{O}_{2, \mathrm{MRC}}$ as the event that the symbol $s_{2}$ is in outage in the CRS-NOMA using MRC in Rayleigh fading. Therefore, 
\begin{align}
	\Pr(\mathcal{O}_{2, \mathrm{MRC}}) & \!=\!\! F_{\lambda_{sr}}\!(\Theta)\! +\! F_{\lambda_{rd}}\!\left(\!\!\dfrac{\epsilon_{2}}{\rho}\!\!\right)\! -\! F_{\lambda_{sr}}\!(\Theta) F_{\lambda_{rd}}\!\left(\!\!\dfrac{\epsilon_{2}}{\rho}\!\!\right).\label{P_out_s2_MRC}
\end{align}
The closed-form expressions for $F_{\lambda_{sr}}(\Theta_1)$, $F_{\lambda_{sd}}(\Theta_1)$, $F_{\lambda_{sr}}(\Theta)$ and $F_{\lambda_{rd}}(\epsilon_2/\rho)$ can be found using the fact that $\lambda_{sr}$, $\lambda_{sd}$ and $\lambda_{rd}$ are Gamma distributed random variables.

\subsection{Diversity analysis of CRS-NOMA using MRC}
Since $\lambda_{sr}$ is Gamma distributed with shape $N_r$ and scale $\Omega_{sr}$ we have
\begin{equation}
	\begin{aligned}
		F_{\lambda_{sr}}(\Theta_1) = & \dfrac{1}{\Gamma(N_r)}\,\,\gamma\!\left(N_r, \dfrac{\Theta_1}{\Omega_{sr}} \right), 
	\end{aligned}
\end{equation}
where $\gamma(\cdot, \cdot)$ is lower-incomplete Gamma function. Using the series expansion of the lower-incomplete Gamma function as given in~\cite[eqn.~8.11.4,~p.~180]{NIST}, 
\begin{align}
	F_{\lambda_{sr}}\!(\!\Theta_1\!) \!= & \dfrac{1}{\Gamma(N_r)}\!\!\left(\!\! \dfrac{\Theta_1}{\Omega_{sr}}\!\!\right)^{\!\!\!N_r} \!\!\!\!\exp\! \left(\!\! \dfrac{-\Theta_1}{\Omega_{sr}}\!\!\right) \!\!\sum_{k = 0}^{\infty} \dfrac{\Theta_1^k \Gamma(N_r)}{\Omega_{sr}^k \Gamma(N_r + k + 1)}. \notag 
\end{align}
Using the series expansion of the exponential function and replacing $\Theta_1$ by $\tfrac{\epsilon_1}{\rho(a_1 - \epsilon_1 a_2)}$ yields
\begin{align}
	F_{\lambda_{sr}}\!(\!\Theta_1\!) \!= & \sum_{l = 0}^{\infty} \sum_{k = 0}^{\infty}\dfrac{(-1)^l \Theta_1^{N_r + l + k}}{\Omega_{sr}^{N_r + l + k} \Gamma(N_r + l + k)} \notag \\
	= & \dfrac{\epsilon_1^{N_r} \rho^{-N_r}}{(a_1 - \epsilon_1 a_2)^{N_r} \Omega_{sr}^{N_r} \Gamma(N_r)} + \mathbb O\left( \!\rho^{-(N_r + 1)}\!\right). \label{F_lambdaSR_order}
\end{align}
It is clear from \eqref{F_lambdaSR_order} that $F_{\lambda_{sr}}(\Theta_1)$ decays as $\rho^{-N_r}$ as $\rho~\to~\infty$. Similarly, it can be proved that $F_{\lambda_{sd}}(\Theta_1)$ decays as $\rho^{-N_d}$ as $\rho \to \infty$. Also, using the series expansion of the lower-incomplete Gamma function and the exponential function, we have
\begin{align}
	& F_{\lambda_{sr}}\!(\!\Theta_1\!) F_{\lambda_{sd}}\!(\!\Theta_1\!) = \dfrac{1}{\Gamma(\!N_r\!) \Gamma(\!N_d\!)} \gamma \!\left(\!\! N_r, \dfrac{\Theta_1}{\Omega_{sr}}\!\!\right) \gamma \!\left(\!\! N_d, \dfrac{\Theta_1}{\Omega_{sd}}\!\right) \notag \\
	= & \sum_{l = 0}^{\infty}\sum_{k = 0}^{\infty} \sum_{i = 0}^{\infty} \sum_{j = 0}^{\infty} \dfrac{(-1)^{l + i} \Theta_1^{N_r + N_d + l + k + i + j}}{\Omega_{sr}^{N_r + l + k} \Omega_{sd}^{N_d + i + j}} \notag \\
	& \hspace{3cm}\times \dfrac{1}{\Gamma(N_r + l + k) \Gamma(N_d + i + j)} \notag \\
	= & \dfrac{\epsilon_1^{N_r + N_d} \rho^{-(N_r + N_d)}}{(\!a_1 \!-\! \epsilon_1 a_2\!)^{\!N_r + N_d} \Omega_{sr}^{N_r} \Omega_{sd}^{N_d} \Gamma(N_r) \Gamma(N_d)} \!+\! \mathbb O\!\left(\! \rho^{-(N_r + N_d + 1)}\!\right)\!.\label{F_lambdaSRlambdaSD_order}
\end{align}
Hence it is straightforward to conclude using \eqref{P_out_s1_MRC}, \eqref{F_lambdaSR_order} and \eqref{F_lambdaSRlambdaSD_order} that the diversity order for the symbol $s_{1}$ is $\min\{N_r, N_d, N_r N_d\} = \min\{N_r, N_d\}$. Analogously, by representing $F_{\lambda_{sr}}(\Theta)$ and $F_{\lambda_{rd}}(\epsilon_{2}/\rho)$ in \eqref{P_out_s2_MRC} in terms of the lower-incomplete gamma function, it can be shown that the diversity order for the symbol $s_{2}$ is $\min(N_r, N_d, N_r N_d) = \min(N_r, N_d)$.

\section{Results and Discussion}
In this section we present analytical and numerical\footnote{We do not realize the actual scenario for numerical computation, but rather generate the random variables and then evaluate~\eqref{C_Sum_SC},~\eqref{C_SC-OMA_integral},~\eqref{C_sum_MRC} and~\eqref{C_MRC-OMA_Integral}.} results for the average achievable rate and the outage probability for the cooperative relaying system. We consider the CRS system where $\Omega_{sd} = 1$, $\Omega_{sr} = 10$ and $\Omega_{rd} = 2.5$. For all NOMA-based systems, we consider $a_{2} = 0.1$ and $R_{1} = R_2 = 1$~bps/Hz. Fig. \ref{Capacity} shows a comparison of the average achievable rate for the CRS-NOMA (both numerical and analytical results) and CRS-OMA (numerical results) systems. It is clear from the figure that for low transmit SNR $\rho$, the CRS-NOMA system performs worse compared to the conventional CRS-OMA system in terms of achievable rate, but as the transmit SNR $\rho$ becomes large, the CRS-NOMA system outperforms its OMA counterpart for both SC and MRC schemes. It is evident from~Fig.~\ref{CapacitySC} that the CRS-NOMA using SC with $N_r = N_d = 1$ achieves the same spectral efficiency as that of the CRS-OMA using SC with $N_r = N_d = 2$ at high transmit SNR. Also, the CRS-NOMA using SC with $N_r = N_d = 2$ achieves higher spectral efficiency as compared to CRS-OMA using SC with $N_R = N_d = 4$ at high SNR. From~Fig.~\ref{CapacityMRC}, it is clear that the CRS-NOMA using MRC with $N_r = N_d = 2$ achieves the same spectral efficiency as that of the CRS-OMA using MRC with $N_r = N_d = 4$ at high transmit SNR. It can also be noted that the CRS-NOMA system using MRC results in a higher average achievable sum-rate as compared to the CRS-NOMA system using SC. 

Fig. \ref{Outage_SC} shows the outage probability of the symbols $s_{1}$ and $s_{2}$ with varying transmit SNR $\rho$ for the CRS-NOMA system using SC. It is clear that the diversity order for both symbols is $\min(N_r, N_d)$ as derived in Section III-D.

\begin{figure}[H]
\centering
\begin{subfigure}{.23\textwidth}
  \centering
  \includegraphics[width=1\linewidth]{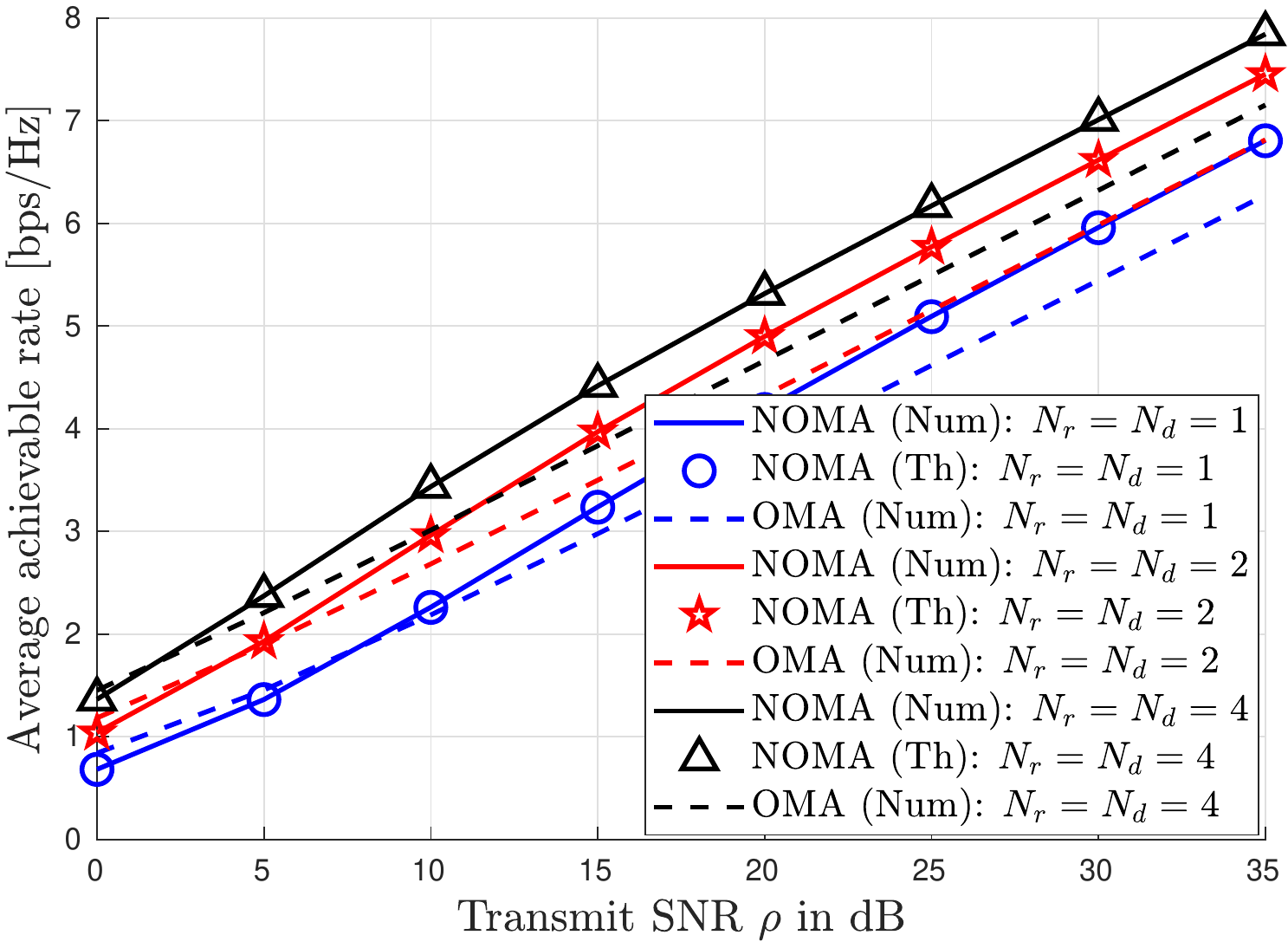}
  \caption{Using SC.}
  \label{CapacitySC}
\end{subfigure}%
\begin{subfigure}{.23\textwidth}
  \centering
  \includegraphics[width=1\linewidth]{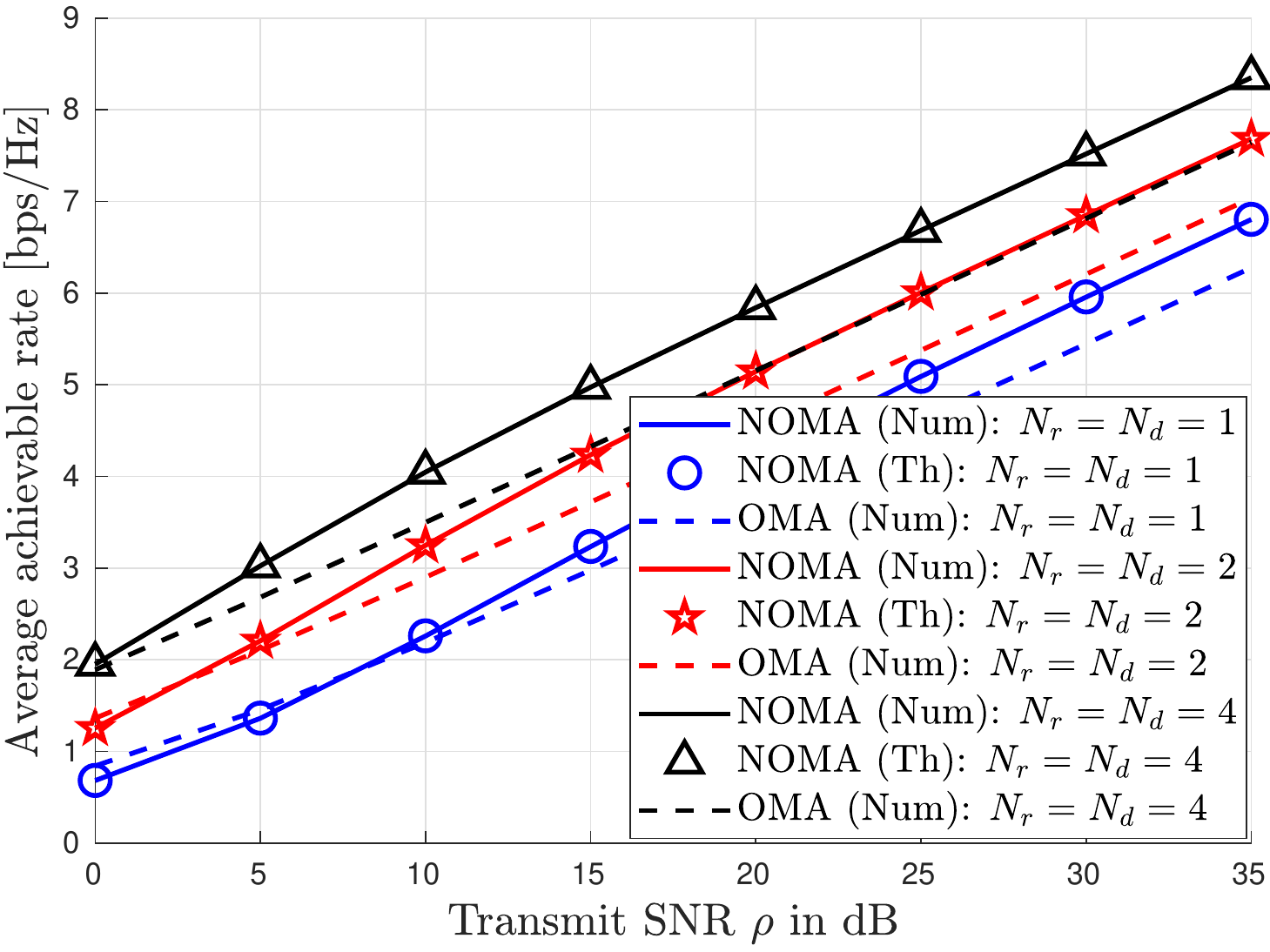}
  \caption{Using MRC.}
  \label{CapacityMRC}
\end{subfigure}
\caption{Average achievable rate for the CRS.}
\label{Capacity}
\end{figure}

\begin{figure}[H]
\centering
\begin{subfigure}{.23\textwidth}
  \centering
  \includegraphics[width=1\linewidth]{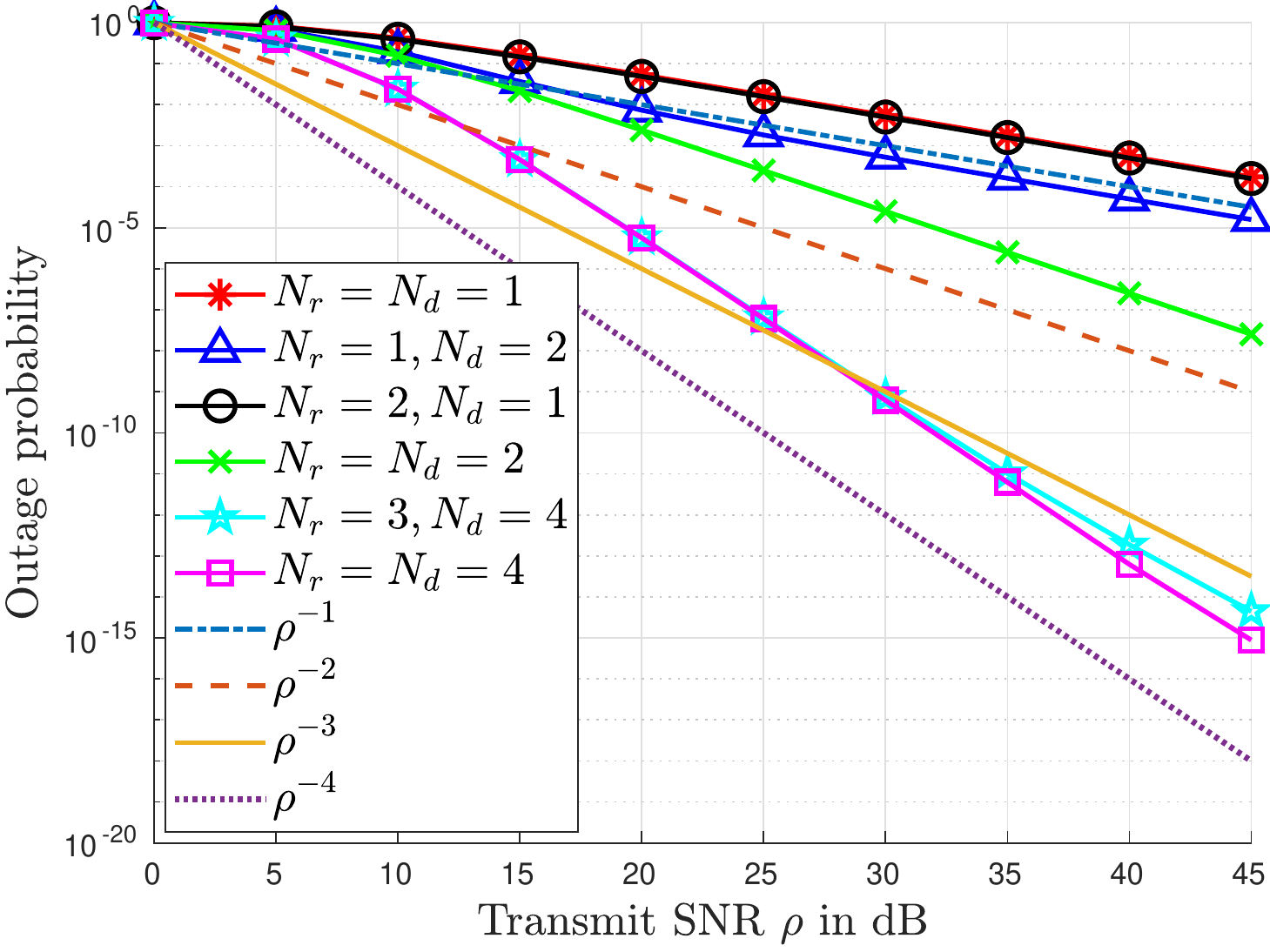}
  \caption{Symbol $s_{1}$}
  \label{Outage_s1_SC}
\end{subfigure}%
\begin{subfigure}{.23\textwidth}
  \centering
  \includegraphics[width=1\linewidth]{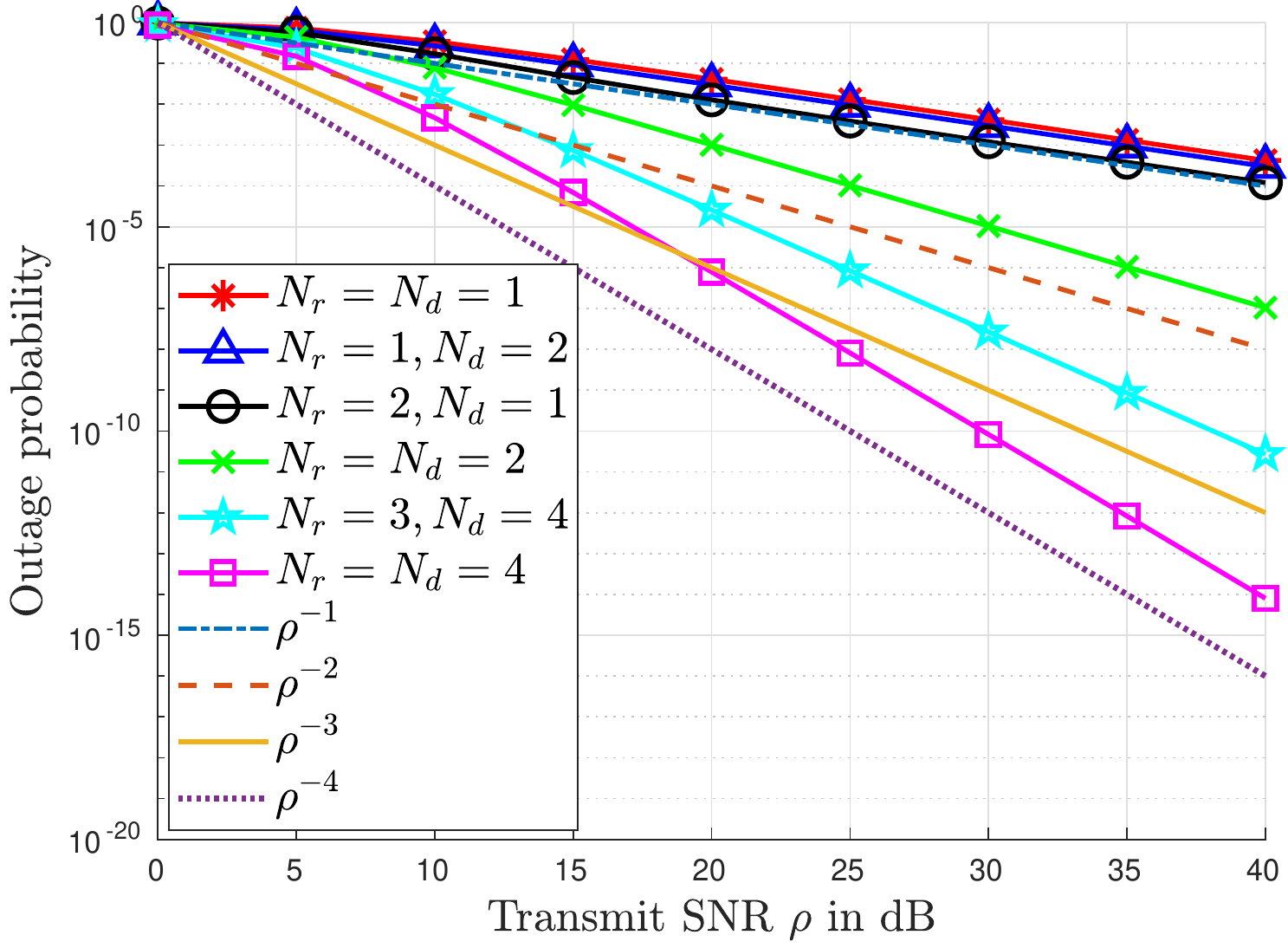}
  \caption{Symbol $s_{2}$}
  \label{Outage_s2_SC}
\end{subfigure}
\caption{Outage probability for CRS-NOMA using SC.}
\label{Outage_SC}
\end{figure}

\begin{figure}[H]
\centering
\begin{subfigure}{.23\textwidth}
  \centering
  \includegraphics[width=1\linewidth]{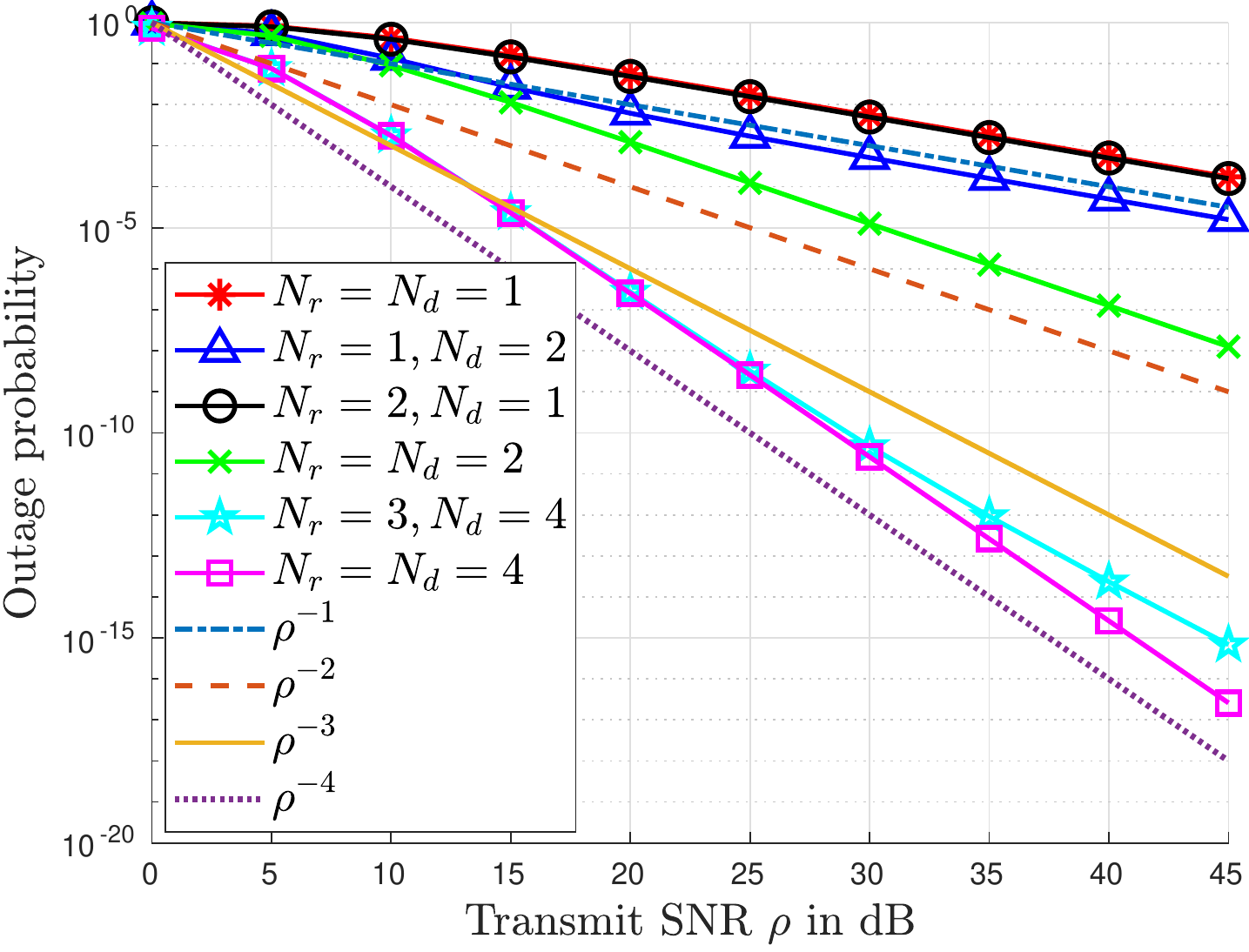}
  \caption{Symbol $s_{1}$}
  \label{Outage_s1_MRC}
\end{subfigure}%
\begin{subfigure}{.23\textwidth}
  \centering
  \includegraphics[width=1\linewidth]{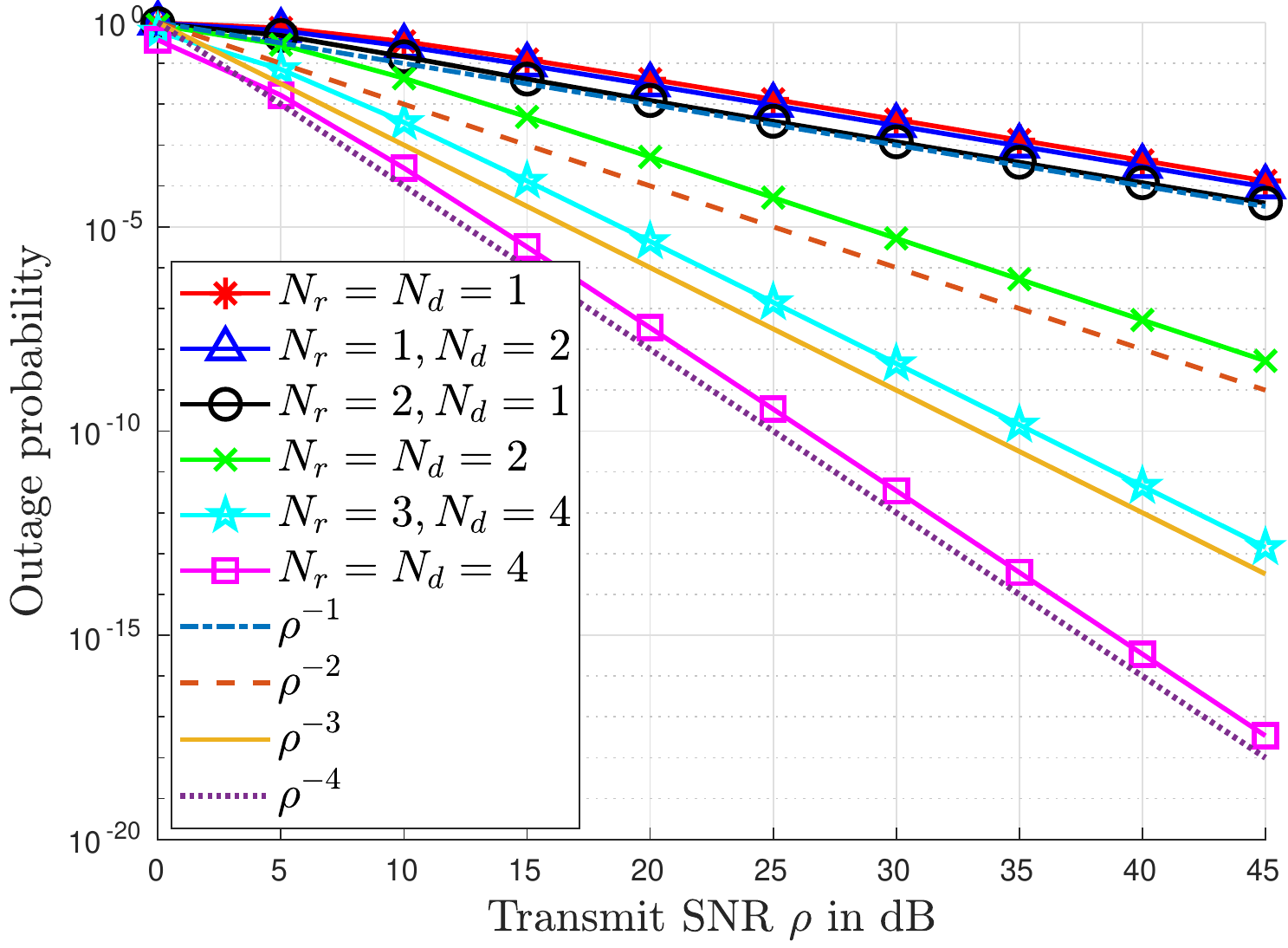}
  \caption{Symbol $s_{2}$}
  \label{Outage_s2_MRC}
\end{subfigure}
\caption{Outage probability for CRS-NOMA using MRC.}
\label{Outage_MRC}
\end{figure}

Fig. \ref{Outage_MRC} shows the outage probability of the symbols $s_{1}$ and $s_{2}$ with varying transmit SNR $\rho$ for the CRS-NOMA system using MRC. It is evident from the figure that the diversity order for both symbols is $\min(N_r, N_d)$ as proved analytically. It can also be noted that the outage probabilities for symbols $s_1$ and $s_2$ are lower for the CRS-NOMA system using MRC as compared to the corresponding probabilities for the CRS-NOMA system using SC.

\section{Conclusion}
In this paper, we provided a comprehensive achievable sum-rate analysis of a CRS-NOMA system with receive diversity. We considered two different diversity combining schemes -- SC and MRC. It was shown that the CRS-NOMA system outperforms its OMA-based counterpart by achieving higher spectral efficiency. Our analysis also confirms that the CRS-NOMA can achieve the same rate as CRS-OMA, but with a smaller number of receive antennas. We also presented the outage probability analysis of the CRS-NOMA system. Diversity analysis of the CRS-NOMA system confirms that the system achieves full diversity order of $\min(N_r, N_d)$ for both SC and MRC schemes.

\appendices 
\section{Proof of Theorem 1}

Since $|h_{sr, i}|$ is Rayleigh distributed for every $i \in \{1, 2, \ldots, N_r\}$, the CDF of $|h_{sr, i^*}|$ is given by
\begin{align}
	F_{|h_{sr, i^*}|}(x) & = \left[1 - \exp\left( \dfrac{-x^{2}}{\Omega_{sr}}\right) \right]^{N_{r}} \notag \\
	& = 1 + \sum_{k = 1}^{N_{r}}(-1)^{k} \binom{N_{r}}{k} \exp \left( \dfrac{-kx^{2}}{\Omega_{sr}}\right). \notag 
\end{align}
Therefore, the CDF of $\delta_{sr}$ can be obtained as 
\begin{align}
	F_{\delta_{sr}}(x) & = \Pr(|h_{sr, i^*}|^{2} \leq x) = \Pr(|h_{sr, i^*}| \leq \sqrt{x}) \notag \\
	& =  1 + \sum_{k = 1}^{N_{r}}(-1)^{k} \binom{N_{r}}{k} \exp \left( \dfrac{-kx}{\Omega_{sr}}\right). \label{FdeltaSR}
\end{align}
The CDF of $\delta_{sd}$ (resp. $\delta_{rd}$) can be found by replacing $\Omega_{sr}$ by $\Omega_{sd}$ (resp. $\Omega_{rd}$), while also replacing $N_r$ by $N_d$, in \eqref{FdeltaSR}. The CDF of $X = \min \{\delta_{sr}, \delta_{sd}\}$ can be found as\footnote{Given two independent random variables $\mathcal{U}$ and $\mathcal{V}$ with probability density functions (PDFs) $f_{\mathcal{U}}(x)$ and $f_{\mathcal{V}}(x)$ respectively, and CDFs $F_{\mathcal{U}}(x)$ and $F_{\mathcal{V}}(x)$ respectively, the PDF of $\mathcal{W} \triangleq \min\{\mathcal{U}, \mathcal{V}\}$ is given by $f_{\mathcal{W}}(x) = f_{\mathcal{U}}(x)[1 - F_{\mathcal{V}}(x)] + f_{\mathcal{V}}(x)[1 - F_{\mathcal{U}}(x)]$ and the CDF of $\mathcal{W}$ is given by $F_{\mathcal{W}}(x) = F_{\mathcal{U}}(x) + F_{\mathcal{V}}(x) - F_{\mathcal{U}}(x) F_{\mathcal{V}}(x)$.} 
\begin{align}
	F_{X}(x) = F_{\delta_{sr}}(x) + F_{\delta_{sd}}(x) - F_{\delta_{sr}}(x) F_{\delta_{sd}}(x). \notag 
\end{align}
Therefore, 
\begin{align}
	\!\!\!1\!-\! F_{X}(x) \!=\! & \sum_{k = 1}^{N_r} \sum_{j = 1}^{N_d} (-1)^{k + j} \binom{N_r}{k} \!\!\binom{N_d}{j} \exp \left( -\chi_{k, j} x\right), \label{1-FX}
\end{align}
where $\chi_{k, j} = (k/\Omega_{sr}) + (j/\Omega_{sd})$. Using \eqref{C_s1_SC_integral} and \eqref{1-FX}, we have 
\begin{align}
	\!\!\!I_1 = & \rho \sum_{k = 1}^{N_r} \sum_{j = 1}^{N_d} (-1)^{k + j} \binom{N_r}{k} \binom{N_d}{j} \int_{0}^{\infty} \dfrac{\exp (-\chi_{k, j} x)}{1 + \rho x}\, dx \notag \\
	= & \sum_{k = 1}^{N_r} \sum_{j = 1}^{N_d} (-1)^{k + j} \binom{N_r}{k} \!\!\binom{N_d}{j} \!\!\exp \!\!\left(\! \dfrac{\chi_{k, j}}{\rho}\!\right) \!\Gamma \left(\!\! 0, \!\dfrac{\chi_{k, j}}{\rho}\!\!\right), \label{I1}
\end{align}
where the integral above is solved using \cite[eqn.~(3.352-4),~p.~341]{Grad} and the fact that $-\operatorname{Ei}(-x) = \Gamma(0, x)$. Here $\operatorname{Ei}(\cdot)$ denotes the exponential integral. Similarly, using \eqref{C_s1_SC_integral} and \eqref{1-FX}, we have 
\begin{align}
	I_2 \!\!= \!\!\sum_{k = 1}^{N_r} \!\sum_{j = 1}^{N_d} \!(-1)^{k + j} \!\binom{N_r}{k} \!\! \binom{N_d}{j} \!\exp \!\left( \!\!\dfrac{\chi_{k, j}}{\rho a_2}\!\!\right) \!\Gamma \!\left(\!\! 0, \dfrac{\chi_{k, j}}{\rho a_2}\!\!\right). \label{I2}
\end{align}
Using \eqref{C_s1_SC_integral}, \eqref{I1} and \eqref{I2}, the closed-form expression for the average achievable rate for symbol $s_1$ in Rayleigh fading using SC in CRS-NOMA reduces to \eqref{C_s1_SC_Closed}; this completes the proof.

\section{Proof of Theorem 2}
Since $|h_{sr,i}|\, (1 \le i \le N_r)$ and $|h_{sd,i}|\, (1 \le i \le N_d)$ are Rayleigh distributed, the random variables $\lambda_{sr}$ and $\lambda_{sd}$ are Gamma distributed with shape $N_r$ and $N_d$ respectively, and scale $\Omega_{sr}$ and $\Omega_{sd}$ respectively. Since the shape parameters are positive integers, the corresponding CDFs for $\lambda_{sr}$ and $\lambda_{sd}$ can each be represented as a special case of the Erlang distribution. It follows that the CDF of $\mathcal{X}$ can be written as 
\begin{align}
	F_{\mathcal{X}}(x) = 1 - \exp(x \phi) \sum_{i = 0}^{N_r - 1} \sum_{j = 0}^{N_d - 1} \dfrac{x^{i + j}}{i! j! \Omega_{sr}^{i} \Omega_{sd}^{j}}, \notag 
\end{align}
where $\phi = \Omega_{sr}^{-1} + \Omega_{sd}^{-1}$. Hence $I_{3}$ in \eqref{C_s1_MRC_Integral} can be solved using~\cite[eqn.~(3.383-10), p. 348]{Grad} as 
\begin{align}
	I_{3} & = \!\!\sum_{i = 0}^{N_{r} - 1} \sum_{j = 0}^{N_{d} - 1} \dfrac{1}{i! \,j!\, \Omega_{sr}^{i} \, \Omega_{sd}^{j}}   \int_{0}^{\infty} \dfrac{\exp (-x \phi)  x^{(i + j)}}{1 + x \rho} dx \notag \\
	& = \!\!\sum_{i = 0}^{N_{r} - 1} \sum_{j = 0}^{N_{d} - 1} \dfrac{  \exp \left(\tfrac{\phi}{\rho}\right)  \Gamma(1 + i + j) }{i! \,j!\, \Omega_{sr}^{i} \, \Omega_{sd}^{j}  \rho^{(1 + i + j)}}\Gamma\left(-i - j, \dfrac{\phi}{\rho}\right). \label{I3_Closed}
\end{align}
Similarly, $I_{4}$ can be solved as 
\begin{align}
	\!\!\!I_{4} & = \!\!\!\sum_{i = 0}^{N_{r} - 1} \!\sum_{j = 0}^{N_{d} - 1} \!\!\!\dfrac{ \exp \left(\tfrac{\phi}{\rho a_{2}}\right)  \Gamma(1 + i + j) }{i! \,j!\, \Omega_{sr}^{i} \, \Omega_{sd}^{j} (\rho a_{2})^{(1 + i + j)}} \,\Gamma \!\left(\!-i \!- j, \dfrac{\phi}{\rho a_{2}}\!\right).\!\label{I4_Closed}
\end{align}
Using \eqref{C_s1_MRC_Integral}, \eqref{I3_Closed} and \eqref{I4_Closed}, the closed-form expression for the average achievable rate of symbol $s_{1}$ for CRS-NOMA using MRC in Rayleigh fading reduces to \eqref{C_s1_MRC_Closed}; this completes the proof.

\section*{Acknowledgment}
This publication has emanated from research conducted with the financial support of Science Foundation Ireland (SFI) and is co-funded under the European Regional Development Fund under Grant Number 13/RC/2077.

\bibliographystyle{IEEEtran}
\bibliography{SCC2019}
\end{document}